
\documentclass[submission,creativecommons]{eptcs}

\usepackage{amsthm}
\usepackage{amsmath}

\usepackage[pdftex]{virginialake}
\usepackage[latin1]{inputenc}
\usepackage{stmaryrd}
\usepackage{xspace}

\newcommand{\grdef}[0]{\mathrel{\raise.5pt\hbox{$\mathop{::}$}{=}}}
\newcommand{\mmid}[0]{\:\mid\:}
\newcommand{\qqquad}[0]{\qquad\quad}
\newcommand{\qqqquad}[0]{\qquad\qquad}
\newcommand{\Per}[0]{\Omega}
\newcommand{\LLM}[0]{{\sf LLM}\xspace}

\newcommand{\rn}[1]{${\mathrm{#1}}$\xspace}
\newcommand{\rnm}[1]{{{\mathrm{#1}}}\xspace}

\newcommand{\nfy}[0]{\mathord{\uparrow\hspace*{0.2pt}}}
\newcommand{\pfy}[0]{\mathord{\downarrow\hspace*{0.4pt}}}
\newcommand{\activ}[0]{\nnearrow\xspace}


\newcommand{\rseq}[1]{\vdash #1}
\newcommand{\fseq}[1]{\vDash #1}


\newcommand{\llo}[0]{{{\sf 1}}\xspace}
\newcommand{\llz}[0]{{{\sf 0}}\xspace}
\newcommand{\parr}[0]{\mathbin{\bindnasrepma}\xspace}
\newcommand{\tensor}{\otimes}

\newcommand{\with}[0]{\mathbin{\binampersand}\xspace}
\newcommand{\oc}[0]{\mathord{!}\xspace}
\newcommand{\wn}[0]{\mathord{?}\xspace}
\newcommand{\an}[1]{\overline{#1}}
\newcommand{\lln}[0]{^\bot}


\newcommand{\iaxrule}[2]{
  \vlinf{\rnm{#1}}{}{#2}{}}
\newcommand{\irule}[3]{
  \vlinf{\rnm{#1}}{}{#2}{#3}}

\newcommand{\iruule}[4]{
  \vliinf{\rnm{#1}}{}{#2}{#3}{#4}}



\newcommand{\ider}[1]{\vlderivation{#1}}


\newcommand{\idin}[3]{
  \vlin{\rnm{#1}}{}{#2}{#3}}

\newcommand{\idiin}[4]{
  \vliin{\rnm{#1}}{}{#2}{#3}{#4}}
\newcommand{\idax}[2]{
  \vlin{\rnm{#1}}{}{#2}{\vlhy{}}}
\newcommand{\idopen}[2]{
  \idtrees{#1}{#2}{0.1em}}
\newcommand{\idtrees}[3]{
  \vltr{#1}{#2}{\vlhy{}}{\vlhy{\hspace{#3}}}{\vlhy{}}}


\def\pf{\vdash}
\def\fpf{\vDash}
\def\tensor{\otimes}

\def\foc#1{[#1]}

\def\nshift{\pfy}
\def\pshift{\nfy}

\def\Dd{\mathcal{D}}
\def\Ee{\mathcal{E}}

\newtheorem{lemma}{Lemma}
\newtheorem{theorem}[lemma]{Theorem}
\newtheorem{example}{Example}
\def\dstep#1::#2&#3\\{
\> $#1$ \> :: \> $#2$ \` #3 \\}
\def\dstepn#1&#2\\{
\>      \>    \> $#1$ \` #2 \\}
\def\dgoal#1&#2\\{
\>      \>    \> $#1$ \` #2}
\def\dstepnl#1::#2&#3\\{
    \dstep   #1 :: #2 & {\\\` #3} \\}
\newenvironment{lineproof}{
\begin{tabbing}
\quad \= $\mathcal F_2'$ \= :: \= \kill}
{\end{tabbing}}

\def\deriv#1::#2{%
  \idtrees{#1}{#2}{1em}}

\def\delay#1{\langle#1\rangle}


\title{Cut Elimination in Multifocused Linear Logic}

\author{Taus Brock-Nannestad
\institute{INRIA \& LIX, \'Ecole Polytechnique}
\email{taus.brock-nannestad@inria.fr}
\and Nicolas Guenot
\institute{IT University of Copenhagen}
\email{ngue@itu.dk}}

\begin{document}
\vlnosmallleftlabels

\maketitle

\begin{abstract}
We study cut elimination for a multifocused variant of full linear logic
in the sequent calculus. The multifocused normal form of proofs yields
problems that do not appear in a standard focused system, related to
the constraints in grouping rule instances in focusing phases. We
show that cut elimination can be performed in a sensible way even though
the proof requires some specific lemmas to deal with multifocusing phases,
and discuss the difficulties arising with cut elimination when considering
normal forms of proofs in linear logic.
\end{abstract}

\section{Focusing and Multifocusing in Linear Logic}

The two most important results in the proof theory of linear logic
\cite{girard:87:ll} are the admissibility of the cut rule, and the
completeness of the focused normal form of proofs. The notion of
\emph{focusing} was originally developped by Andreoli \cite{andreoli:92:foc}
with the purpose of improving proof search procedures, but recently it has
been considered more often as a \emph{normal form} that can be obtained by
reorganising the inference steps of a given proof \cite{miller:saurin:07:focg}:
permutations can be used to group \emph{positive} rule instances --- and
to move \emph{negative} rule instances down. This viewpoint is particularly
useful in the natural extension of focusing to \emph{multifocusing}
\cite{chaudhuri:miller:saurin:08:mfoc}, where several positive formulas
are selected to be decomposed \emph{in parallel}. This stronger normal
form is difficult to use for proof search, since not all positive formulas
can be selected in a given sequent: there are complex \emph{dependencies}.
In the multiplicative fragment without units, when the selection of positives is done
maximally, proofs are canonical in the sense that they are in bijection
with proof-nets \cite{girard:96:pn}. For this reason, investigating the
proof theory of multifocused linear logic is necessary to understand the
notion of canonicity in sequent calculi, and possibly design normal forms
of proofs that could be used as proof-nets, while retaining the usual
syntax based on trees of rule instances.

Our purpose here is to study the interaction of cut elimination with
multifocusing. We consider the sequent calculus \LLM shown in Figure
\ref{figllmsys}, which is equivalent to the one found in \cite{saurin:phd}.
It relies on a \emph{polarised} syntax \cite{laurent:phd} where shifts
mark borders between positive and negative connectives, as follows:
$$\begin{array}{r@{~\grdef~}c@{~\mmid~}c@{~\mmid~}c@{~\mmid~}
    c@{~\mmid~}c@{~\mmid~}c@{~\mmid~}l}
  P,Q & a & \llo & P \otimes Q & \llz & P \oplus Q & \oc N & \pfy N \\
  N,M & \an{a} & \bot & N \parr M & \top & N \with M & \wn P & \nfy P \\
\end{array}$$
and we write $P\lln$ for the usual duality operation of linear logic. Moreover,
this system uses sequents of the shapes $\rseq{\Per:\Gamma}$
and $\fseq{\Per:\Psi}$ for \emph{inversion} and \emph{focusing} phases
respectively, where the names used for multisets denote various syntactic
categories:
$$\begin{array}{r@{~\grdef~}c@{~\mmid~}l@{\qqquad}r@{~\grdef~}c@{~\mmid~}
    l@{\qqquad}r@{~\grdef~}c@{~\mmid~}l}
  \Gamma,\Delta & \cdot & \Gamma,N &
  \Psi,\Xi & \Gamma & \Psi,[P] &
  \Theta,\Per & \cdot & \Theta,P \\
\end{array}$$
and the multiset $\Per$ of positives on the left of the sequent is the
\emph{persistent context}, corresponding to a multiset of formulas that
can be duplicated. Finally, the \emph{decision} rule \rn{\nfy} uses the
special syntax $\Per^{\vec{n}}$ to denote a multiset made of arbitrary
numbers of copies of formulas in $\Per$ --- since this rule needs to
allow choosing formulas from $\Per$ to focus on, and possibly several
copies of the same formula.

\newpage

The most important rules for multifocusing are the \rn{\nfy} and \rn{\pfy}
rules, that start and end the focusing phases, respectively. These two rules
act on multisets of formulas rather than on single formulas, and in particular
it is important that all foci are \emph{blurred} at once in the \rn{\pfy}
rule, so that the different phases are clearly separated. They come with
side conditions:
\begin{itemize}
\item in the \rn{\nfy} rule, either $\Per^{\vec{n}}$ or $\Theta$ must be
      non-empty, and
\item in the \rn{\pfy} rule, the multiset $\Delta$ must be non-empty.
\end{itemize}

\begin{figure}[t]
\centerline{\fbox{
$\begin{array}{c@{\quad}}
  \begin{array}{c@{\qqquad}c@{\qqquad}c}
    \vspace*{-0.5em} \\
    \iaxrule{ax}{\fseq{\Per:\an{a},[a]}} &
    \irule{\pfy}{\fseq{\Per:\Gamma,[\pfy \Delta]}}{\rseq{\Per:\Gamma,\Delta}} &
    \irule{\nfy}{\rseq{\Per:\Psi,\nfy \Theta}}
      {\fseq{\Per:[\Per^{\vec{n}}],\Psi,[\Theta]}} \\
  \end{array} \\
  \\
  \begin{array}{c@{\qquad}c@{\qquad}c@{\qquad}c}
    \irule{\bot}{\rseq{\Per:\Gamma,\bot}}{\rseq{\Per:\Gamma}} &
    \iaxrule{\llo}{\fseq{\Per:[\llo]}} &
    \irule{\parr}{\rseq{\Per:\Gamma,N \parr M}}{\rseq{\Per:\Gamma,N,M}} &
    \iruule{\otimes}{\fseq{\Per:\Psi,\Xi,[P \otimes Q]}}
      {\fseq{\Per:\Psi,[P]}}{\fseq{\Per:\Xi,[Q]}} \\
  \end{array} \\
  \\
  \begin{array}{c@{\qquad}c@{\qqqquad}c}
    \iaxrule{\top}{\rseq{\Per:\Gamma,\top}} &
    \iruule{\with}{\rseq{\Per:\Gamma,N \with M}}
      {\rseq{\Per:\Gamma,N}}{\rseq{\Per:\Gamma,M}} &
    \irule{\wn}{\rseq{\Per:\Gamma,\wn P}}{\rseq{\Per,P:\Gamma}} \\
    \\
    \irule{\oplus_L}{\fseq{\Per:\Psi,[P \oplus Q]}}{\fseq{\Per:\Psi,[P]}} &
    \irule{\oplus_R}{\fseq{\Per:\Psi,[P \oplus Q]}}{\fseq{\Per:\Psi,[Q]}} &
    \irule{\oc}{\fseq{\Per:[\oc N]}}{\rseq{\Per:N}} \\
    \vspace*{-0.5em} \\
  \end{array} \\
\end{array}$}}
\caption{Multifocused sequent calculus \LLM for linear logic}
\label{figllmsys}
\end{figure}

The \LLM system is slightly different from other presentations
\cite{chaudhuri:miller:saurin:08:mfoc,saurin:phd} in that it does not
enforce maximal inversion of negative formulas before a focusing phase
can start. From the viewpoint of cut elimination, it makes no difference,
and proving the admissibility of cut for the more permissive system implies
that cut elimination holds for the corresponding system where inversion
is performed maximally. Moreover, we are interested in developing a proof
technique for admissibility  of cut in multifocused systems that could be
used when negative formulas are treated differently.

\begin{example} \label{ex:prf}
The proof shown below uses multifocusing to treat the formulas
$\nfy (a \otimes \pfy \an{b})$ and $\nfy (c \otimes \pfy \an{d})$ in one
single focusing phase. Notice that the two lower instances of the \rn{\otimes}
rule could be permuted, but this is irrelevant here since a multifocusing
phase should be considered as a ``\,black box\,''. Also, this proof is
maximally multifocused, since in both instances of the \rn{\nfy} rule,
all positive formulas that could be picked are focused.
\vspace*{-0.5em}
$$\ider{
  \idin{\nfy}{\rseq{\cdot:\nfy (a \otimes \pfy \an{b}),\an{a},\an{c},
    \nfy (c \otimes \pfy \an{d}),\nfy (b \otimes d)}}{
  \idiin{\otimes}{\fseq{\cdot:[a \otimes \pfy \an{b}],\an{a},\an{c},
    [c \otimes \pfy \an{d}],\nfy (b \otimes d)}}{
  \idax{ax}{\fseq{\cdot:\an{a},[a]}}}{
  \idiin{\otimes}{\fseq{\cdot:[\pfy \an{b}],\an{c},
    [c \otimes \pfy \an{d}],\nfy (b \otimes d)}}{
  \idax{ax}{\fseq{\cdot:\an{c},[c]}}}{
  \idin{\pfy}{\fseq{\cdot:[\pfy \an{b}],[\pfy \an{d}],\nfy (b \otimes d)}}{
  \idin{\nfy}{\rseq{\cdot:\an{b},\an{d},\nfy (b \otimes d)}}{
  \idiin{\otimes}{\fseq{\cdot:\an{b},\an{d},[b \otimes d]}}{
  \idax{ax}{\fseq{\cdot:\an{b},[b]}}}{
  \idax{ax}{\fseq{\cdot:\an{d},[d]}}}}}}}}}$$
\end{example}

\newpage

Notice finally that the \LLM system is obviously sound with respect to
linear logic, in the sense that all focusing annotations can be erased
to turn any proof of \LLM into a valid proof of a dyadic presentation
of linear logic. It is also complete, since it is a generalisation of
the usual singly-focused system, just as the other multifocused systems
\cite{chaudhuri:miller:saurin:08:mfoc,saurin:phd}.

\section{Cut Elimination with Multifocusing}
Because our system has two different kinds of sequents and two
contexts, we get four different cut rules, shown in
Figure~\ref{figcutllm}. Note that in the linear cut rules, the
first premise must always be a focused sequent, as the cut formula in
this premise is positive, and hence must appear inside a
focus. Conversely, for cuts acting on the persistent context, the
first premise must be an unfocused sequent.

\begin{figure}[t]
\centerline{\fbox{
$\begin{array}{c@{\quad}}
  \begin{array}{c@{\qqquad}c}
    \vspace*{-0.5em} \\
\iruule{cut}{\fpf\Per:\Psi,\Gamma}{\fpf\Per:\Psi,\foc{P}}{\pf\Per:\Gamma,P^\bot}
&
\iruule{fcut}{\fpf\Per:\Psi,\Xi}{\fpf\Per:\Psi,\foc{P}}{\fpf\Per:\Xi,P^\bot}
\end{array}\\
\begin{array}{c@{\qqquad}c}
  \vspace*{-0.5em} \\
  \iruule{cut!}{\pf\Per:\Gamma}{\pf\Per:P^\bot}{\pf\Per,P:\Gamma}
  &
  \iruule{fcut!}{\fpf\Per:\Psi}{\pf\Per:P^\bot}{\fpf\Per,P:\Psi}
\end{array}
\end{array}$
}}
\caption{Cut rules for \LLM}
\label{figcutllm}
\end{figure}

A standard way of proving admissibility of the cut rule is to proceed by
lexicographical induction on the structure of the cut formula and
the two input derivations. The cases of that proof fall into various
categories depending on whether the cut formula is being decomposed (in
the so-called principal cases) or whether it simply moves the cut further
up in the proof (in the so-called commutative cases). Thus, in the
singly-focused system, you might see the following cut:
\vspace{-1.5em}
\[
\ider{
  \idiin{cut}{\pf\Per:\Psi,\Gamma,\pshift\Theta}{
    \idopen{\Dd}{\fpf\Per:\Psi,\foc{P}}
  }{
    \idin{\pshift}{\pf\Per:\Gamma,\pshift\Theta,P^\bot}{
      \idopen{\Ee}{\fpf\Per:\Gamma,\foc{\Theta},P^\bot}
    }
  }
}
\]
And reduce it as follows:
\begin{lineproof}
  \dstepn \fpf\Per:\Psi,\Gamma,\foc{\Theta} & by \rn{fcut} on $P,\Dd,\Ee$.\\
  \dgoal \pf\Per:\Psi,\Gamma,\pshift\Theta & by \rn{\pshift}.\\
\end{lineproof}
In the singly-focused system, $\Psi$ cannot contain any
focus\footnote{Additionally, $\Theta$ must consist of a single formula,
  but this is not important.}, hence the resulting sequent after the
cut contains only the foci in $\Theta$, and thus the \rn{\pshift} rule can
be applied. In the multifocused system, this is no longer the case ---
$\Psi$ may contain several foci, and thus we cannot be sure that the
side condition on the \rn{\pshift} rule is satisfied.

There are various ways one might try to fix the above problem. One way
would be to change the cut rule itself to make the conclusion more
permissive. For instance, if we define the following \emph{neutralising}
operation:
\[
  \delay{\Gamma}\;=\;\Gamma\qqquad
  \delay{\Psi,\foc{P}}\;=\;\delay{\Psi},\pshift P
\]
we can restate the \rn{cut} rules as
\[
\iruule{cut}{\pf\Per:\delay{\Psi,\Gamma}}{\fpf\Per:\Psi,\foc{P}}{\pf\Per:\Gamma,P^\bot}
\qquad
\iruule{fcut}{\pf\Per:\delay{\Psi,\Xi}}{\fpf\Per:\Psi,\foc{P}}{\fpf\Per:\Xi,P^\bot}
\]
and the above proof would then go through, as 
$\delay{\Psi,\Gamma,\foc{\Theta}}=\delay{\Psi},
  \Gamma,\pshift\Theta=\delay{\Psi,\Gamma,\pshift\Theta}.$
The delayed cut above is somewhat weak, however, as it forces a delay
even in cases where it is not necessary.

As we will show now, it is not necessary to change the statement of
the cut rule to prove it admissible. To circumvent the above problem,
we will instead introduce a few lemmas about the structure of the
positive phases. First, we introduce the notion of a \emph{spent} context
$\Sigma$:
\[\Sigma ~\grdef~ \Gamma \mmid \Sigma,\foc{\nshift N}\]

Intuitively, a context $\Sigma$ may contain foci, but none of these
foci can be active anymore. The main purpose of this definition is to
facilitate the proof of the following lemma, which will play an
important role in the cut admissibility proof.
\begin{lemma}[Multifocused Decomposition]
  Given a proof of a sequent $\fpf\Per: \Psi,\foc{P}$ there exists:
  \begin{enumerate}
  \item a proof of the sequent $\fpf\Per:\Sigma,\foc{P}$, for some suitable
    $\Sigma$, and
  \item for any $\Delta$, an open derivation from $\fpf\Per:\Sigma,\Delta$ to
    $\fpf\Per:\Psi,\Delta$.
  \end{enumerate}
  The combined height of these two derivations is exactly the height
  of the input derivation.
\end{lemma}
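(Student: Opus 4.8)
The plan is to argue by induction on the height of the given derivation of $\fpf\Per:\Psi,\foc{P}$, analysing its bottom-most rule $r$. The observation that structures the whole argument is that in a focusing sequent only the positive rules apply, and that among these \rn{ax}, \rn{\oc} and the rule for $\llo$ all require the focusing context to be reduced to a single focus together with (for \rn{ax}) its dual. Hence, as long as $\foc{P}$ is present alongside other material, none of these three rules can be $r$ unless it is \rn{ax}, \rn{\oc} or the $\llo$-rule acting on $\foc{P}$ itself. Consequently the only rules that may decompose a focus of $\Psi$ while $\foc{P}$ survives are \rn{\otimes}, \rn{\oplus_L}, \rn{\oplus_R} and the blurring rule \rn{\pfy}. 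Throughout, I maintain the invariant that the extracted context is spent, which is stable because a union of spent contexts is spent and because no step of the construction exposes a fresh active focus.

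The degenerate cases dispose of the rules that end a focus and of the blur. If $r$ is \rn{\pfy}, then every focus, and in particular every focus of $\Psi$, is of the form $\foc{\nshift N}$, so $\Psi$ is already spent; I take $\Sigma := \Psi$, let derivation~(1) be the entire input derivation and let derivation~(2) be empty. If instead $r$ is \rn{ax}, \rn{\oc} or the rule for $\llo$ acting on $\foc{P}$, then $\Psi$ contains no focus at all --- it is either empty or the single dual formula $\an{a}$ --- and is therefore already spent; again I set $\Sigma := \Psi$, take~(1) to be the input derivation and~(2) to be empty. In all these cases the height equation is immediate.

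The routine inductive cases are the choice and tensor rules that do not split off $\foc{P}$. If $r$ acts on a focus $\foc{Q_1 \oplus Q_2}$ or $\foc{Q_1 \otimes Q_2}$ of $\Psi$, I apply the induction hypothesis to the unique premise that still contains $\foc{P}$: this yields~(1) unchanged, and yields~(2) by reattaching $r$ as the new bottom rule of the returned open derivation, the $\foc{P}$-free premise of a \rn{\otimes} being retained verbatim as a closed side-branch. Symmetrically, if $r$ acts on $\foc{P}$ with $P = P_1 \oplus P_2$, I recurse on the single premise $\fpf\Per:\Psi,\foc{P_i}$, keep the returned open derivation as~(2), and prepend the matching \rn{\oplus} rule to the returned proof to obtain~(1). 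In each of these cases the two output derivations together contain exactly the rule instances of the input, each once, so the stated height equation holds.

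The genuine difficulty is the case where $r$ is \rn{\otimes} acting on $\foc{P}$ itself, say $P = P_1 \otimes P_2$ with the context split as $\Psi = \Psi_1, \Psi_2$. I recurse on \emph{both} premises $\fpf\Per:\Psi_1,\foc{P_1}$ and $\fpf\Per:\Psi_2,\foc{P_2}$, obtaining spent contexts $\Sigma_1, \Sigma_2$, proofs of $\fpf\Per:\Sigma_i,\foc{P_i}$, and, crucially, open derivations that are valid for \emph{every} choice of the spectator context $\Delta$. I set $\Sigma := \Sigma_1, \Sigma_2$, which is again spent, and build~(1) by applying \rn{\otimes} to the two returned proofs. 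The delicate part is~(2): the two returned open derivations each reconstruct only their own half of the context, so they cannot merely be juxtaposed. Instead I compose them sequentially, instantiating the first with spectator context $\Sigma_2,\Delta$ to pass from $\fpf\Per:\Sigma_1,\Sigma_2,\Delta$ to $\fpf\Per:\Psi_1,\Sigma_2,\Delta$, and then instantiating the second with spectator context $\Psi_1,\Delta$ to pass from $\fpf\Per:\Psi_1,\Sigma_2,\Delta$ to $\fpf\Per:\Psi_1,\Psi_2,\Delta = \fpf\Per:\Psi,\Delta$. It is exactly the universal quantification over $\Delta$ in the statement that licenses this sequential reuse of the two open derivations, and I expect this to be the crux of the proof; the additive height equation then follows by summing the two instances of the induction hypothesis with the single \rn{\otimes} instance placed in~(1).
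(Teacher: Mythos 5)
Your proposal is correct and takes essentially the same approach as the paper: induction on the input derivation, with the spent/immediate cases (\rn{ax}, \rn{\llo}, \rn{\oc}, \rn{\pfy}) dispatched by taking $\Sigma:=\Psi$, the commutative cases handled by reattaching the lowest rule beneath the returned open derivation, and the crucial case of \rn{\otimes} splitting $\foc{P}$ resolved exactly as in the paper, by sequentially composing the two open derivations using the universal quantification over the spectator context $\Delta$. The only difference is the immaterial order of composition in that case (you instantiate $\Ee_1'$ first and $\Ee_2'$ second, the paper does the reverse).
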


\begin{proof}
  Let $\Dd$ be the given derivation of $\fpf\Per:\Psi,\foc{P}$. We proceed by
  induction on the structure of $\Dd$. When we apply the induction
  hypothesis in the remainder of this proof, we will refer to the
  first derivation as $\Dd'$, and the open derivation instantiated with
  $\Delta$ as $\Ee'_\Delta$.
  \begin{description}
    \item[Case $\Psi=\Sigma$:] 
      Immediate. This case also covers the cases where $\Dd$ ends in
      the \rn{\llo}, \rn{ax} or \rn{\nshift} rules.
    \item[Case \rn{\tensor}, $P=Q\tensor R$ principal:] $\phantom{oe}$
      \vspace{-2em}
      \[\ider{\idiin{\tensor}{\fpf\Per:\Psi_1,\Psi_2,\foc{Q\tensor R}}
                        {\idopen{\Dd_1}{\fpf\Per:\Psi_1,\foc{Q}}}
                        {\idopen{\Dd_2}{\fpf\Per:\Psi_2,\foc{R}}}}\]
      We construct $\Dd'$ as follows:
      \begin{lineproof}
        \dstep \Dd_1' :: \fpf\Per: \Sigma_1,\foc{Q} & by the induction hypothesis on $\Dd_1$.\\
        \dstep \Dd_2' :: \fpf\Per: \Sigma_2,\foc{R} & by the induction hypothesis on $\Dd_2$.\\
        \dgoal \fpf\Per: \Sigma_1,\Sigma_2,\foc{Q\tensor R} & by \rn{\tensor} on
        $\Dd_1'$, $\Dd_2'$.\\
      \end{lineproof}
      We construct $\Ee'_\Delta$ as follows:
      \begin{lineproof}
        \dstepn \fpf\Per:\Sigma_1,\Sigma_2,\Delta & by assumption.\\
        \dstepn \fpf\Per:\Sigma_1,\Psi_2,\Delta & by ${\Ee_2'}_{\Delta,\Sigma_1}$.\\
        \dgoal \fpf\Per:\Psi_1,\Psi_2,\Delta & by ${\Ee_1'}_{\Delta,\Psi_2}$.\\
      \end{lineproof}
    \item[Case \rn{\tensor}, $P$ not principal:] $\phantom{oe}$
      \vspace{-2em}
      \[\ider{\idiin{\tensor}{\fpf\Per:\Psi_1,\Psi_2,\foc{Q\tensor R},\foc{P}}
      {\idopen{\Dd_1}{\fpf\Per:\Psi_1,\foc{Q},\foc{P}}}
      {\idopen{\Dd_2}{\fpf\Per:\Psi_2,\foc{R}}}}
       \]
      We construct $\Dd'$ as follows:
      \begin{lineproof}
        \dgoal \fpf\Per:\Sigma,\foc{P} & by the induction hypothesis on $\Dd_1$.\\
      \end{lineproof}

      \newpage

      We construct $\Ee'_\Delta$ as follows:
      \begin{lineproof}
        \dstepn \fpf\Per:\Sigma,\Delta & by assumption.\\
        \dstep \Dd_1'' ::  \fpf\Per:\Psi_1,\foc{Q},\Delta & by ${\Ee_1'}_\Delta$.\\
        \dgoal \fpf\Per:\Psi_1,\Psi_2,\foc{Q\tensor R},\Delta & by \rn{\tensor} on
        $\Dd_1''$ and $\Dd_2$.\\
      \end{lineproof}
  \end{description}
  The remaining cases are similar.
\end{proof}

In some cases, we will make use of the following observation about
sequents of the form $\fpf\Per:\Sigma$. First we will define a
\emph{lowering} operation that inverts spent foci:
\def\lowered#1{\lfloor#1\rfloor}
\[
 \lowered{\Gamma}\;=\;\Gamma\qqquad
 \lowered{\Sigma,\foc{\nshift N}}\;=\;\lowered{\Sigma},N
\]
With the above definition, the following lemma is an easy consequence:
\begin{lemma}[Lowering spent foci]
  The following rule is admissible:
  \[\irule{\nshift^{-1}}{\fpf\Per:\lowered{\Sigma},\foc{P}}{\fpf\Per:\Sigma,\foc{P}}\]
  Furthermore, it is \emph{strongly} admissible, in the sense that
  applying the rule does not change the shape of the resulting derivation.
\end{lemma}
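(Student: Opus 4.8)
The plan is to prove the rule admissible by a direct induction on the structure of the given derivation $\Dd$ of $\fpf\Per:\Sigma,\foc{P}$, with a case analysis on its last rule. The observation that makes this work is that every focus occurring in a spent context $\Sigma$ has the form $\foc{\nshift N}$, and such a focus can never be the principal formula of a rule that decomposes a single focus (it is neither an atom, a unit, a tensor, a plus, nor a bang). Hence the last rule of $\Dd$ must either decompose the active focus $\foc{P}$, or be an instance of the \rn{\pfy} rule, which blurs all foci at once and therefore forces $P$ itself to be of the form $\nshift M$. In each case I would produce a derivation of $\fpf\Per:\lowered{\Sigma},\foc{P}$ with the same underlying tree, which is exactly what strong admissibility requires.

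First I would dispatch the leaf cases. When $\Dd$ ends in \rn{ax}, \rn{\llo} or \rn{\oc}, the shape of the rule forces the context around $\foc{P}$ to carry no foci at all, so $\Sigma$ is already unfocused, $\lowered{\Sigma}=\Sigma$, and there is nothing to do. The genuinely recursive cases are \rn{\tensor} and the two rules \rn{\oplus_L}, \rn{\oplus_R} acting on $\foc{P}$. For \rn{\oplus_L} and \rn{\oplus_R} the spent context is unchanged in the premise, so I would simply apply the induction hypothesis and reapply the same rule. For \rn{\tensor} the active focus is $\foc{Q\tensor R}$ and $\Sigma$ is split as $\Sigma_1,\Sigma_2$ between the two premises; since any sub-multiset of a spent context is again spent, the induction hypothesis applies to both branches, and reapplying \rn{\tensor} gives the result, using that lowering distributes over context union, $\lowered{\Sigma_1,\Sigma_2}=\lowered{\Sigma_1},\lowered{\Sigma_2}$.

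The heart of the argument --- and the only case where lowering does any real work --- is when $P=\nshift M$ and $\Dd$ ends in the \rn{\pfy} rule. There the original derivation blurs the entire multiset of down-shifts, so its premise is a derivation of $\pf\Per:\lowered{\Sigma},M$. The key point is that this is already the sequent one obtains by blurring only $M$ while leaving the formulas of $\lowered{\Sigma}$ unfocused. I would therefore reapply the \rn{\pfy} rule, now blurring only the active focus $\foc{\nshift M}$, and reuse the original premise subderivation verbatim; the side condition of \rn{\pfy} is satisfied because the blurred multiset, the singleton consisting of $M$, is non-empty. This is where I expect the only subtlety to lie: one must check that shrinking the blurred multiset down to $M$ alone leaves the premise unchanged and respects the non-emptiness condition. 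Since every case either leaves a leaf untouched, or reapplies the same rule to recursively transformed subderivations, and the \rn{\pfy} case reuses its premise without modification, the output derivation is structurally identical to $\Dd$, establishing strong admissibility.
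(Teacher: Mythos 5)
Your proof is correct and follows essentially the same route as the paper's: an induction on the given derivation whose crucial observation is that spent foci can never be active, so the last rule either decomposes $\foc{P}$ (handled by reapplying it to the inductively transformed premises) or is the \rn{\pfy} rule, where one reapplies \rn{\pfy} blurring only the active formula and reuses the premise verbatim. The paper's proof only details this \rn{\pfy} case, which you correctly identify as the heart of the argument; your explicit treatment of the leaf and recursive cases matches what the paper leaves implicit.
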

\begin{proof}
  By induction on the given derivation of $\fpf\Per:\Sigma,\foc{P}$. The
  crucial observation is the fact that since $\Sigma$ only contains
  spent foci, the only active formula is $P$. The base case is when
  $P=\pfy N$, and in this case the only rule that could produce the
  sequent $\fpf\Per:\Sigma,\foc{\pfy N}$ is the \rn{\pfy} rule applied to the
  sequent $\pf\Per:\lowered{\Sigma},N$. By applying the \rn{\pfy} rule again,
  but this time only to the formula $N$, we get the desired conclusion.
\end{proof}

With the above lemmas and rules, we can now tackle the admissibility
of the cut rules:
\begin{theorem}[Admissibility of cut]
  The rules in Figure~\ref{figcutllm} are admissible in \LLM.
\end{theorem}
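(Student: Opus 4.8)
The plan is to prove the four rules of Figure~\ref{figcutllm} admissible simultaneously, by a single well-founded induction. As principal measure I would take the size of the cut formula $P$; as secondary measure a flag recording whether the cut acts on the persistent or on the linear context, ranking \rn{cut!} and \rn{fcut!} above \rn{cut} and \rn{fcut}; and as tertiary measure the sum of the heights of the two premise derivations. The four rules must be treated together, because the reductions move between them: eliminating a \rn{cut} whose right premise ends in a phase rule produces an \rn{fcut}, and eliminating a linear cut on a formula $\oc N$ produces a cut on the persistent context. As usual the cases split into principal cases, where $P$ is introduced in both premises, and commutative cases, where the cut is permuted above the last rule of one premise; principal cases strictly decrease the cut formula, while commutative cases leave it unchanged and strictly decrease the height measure (or the rule flag).

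For the principal cases I would pair each positive connective sitting under the focus with its dual negative connective inverted in the other premise: $a$ against \rn{ax}, \llo against \rn{\bot}, $Q\tensor R$ against \rn{\parr}, $P\oplus Q$ against \rn{\with}, $\oc N$ against \rn{\wn}, and the shift $\pfy N$ against a decision on $\nfy N^\bot$. The multiplicative, additive and unit cases reduce in the standard way to one or two cuts on the immediate subformulas. The exponential case reduces the linear cut on $\oc N$ to a persistent cut on the strictly smaller formula $N^\bot$ that the \rn{\wn} rule has moved into $\Per$; this persistent cut is in turn discharged by a subsidiary induction on its second premise, copying the proof of $\pf\Per:N$ at each decision rule that selects the cut formula and triggering there a linear cut on $N^\bot$, which is lower in the rule flag. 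The shift case is where the first structural lemma enters: the blurring premise $\fpf\Per:\Sigma,\foc{\pfy N}$ resolves all of its foci at once, so I would first invoke the Lowering lemma to separate the single active focus $\pfy N$ from the spent foci of $\Sigma$, after which the cut reduces to a cut on the strictly smaller $N^\bot$.

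The commutative cases are routine for every rule that does not end a focusing phase ($\bot$, $\parr$, $\with$, $\wn$, $\oplus$, and a non-principal $\tensor$): the cut simply moves above them, branching where the rule branches, and the height measure decreases. The genuinely delicate commutative case, and the one I expect to be the main obstacle, is exactly the one flagged in the text: a linear cut whose relevant premise ends in a rule (\rn{\nfy} or \rn{\pfy}) that resolves all foci simultaneously but does not touch the cut formula. Naively permuting the cut above that rule leaves the conclusion carrying the foci contributed by the other premise, namely $\Psi$, so the phase-ending rule can no longer fire. This is precisely what the Multifocused Decomposition lemma is designed to unblock: I would replace the focused premise $\fpf\Per:\Psi,\foc P$ by a proof of $\fpf\Per:\Sigma,\foc P$ over a spent context $\Sigma$, perform the now measure-decreasing cut there, use the Lowering lemma to turn the spent foci of $\Sigma$ into ordinary inverted formulas so that only the intended foci remain active, fire the phase-ending rule, and finally reapply the open derivation $\Ee'_\Delta$ supplied by the decomposition lemma, instantiated with the material coming from the other premise, to reintroduce the full context $\Psi$ in the conclusion.

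The crux of making this work is keeping the induction well-founded across the use of the decomposition lemma, and here its height clause is essential: since the proof of $\fpf\Per:\Sigma,\foc P$ together with the rebuilding derivation have combined height exactly that of the input, the recursive cut is performed against a derivation no taller than the original focused premise and against a strict subderivation of the other premise, so the height measure genuinely decreases, while $\Ee'_\Delta$ introduces no new cut. I would therefore organise the argument so that every appeal to the decomposition and lowering lemmas is made \emph{before} the recursive cut, guaranteeing that the object on which we recurse is strictly smaller in the lexicographic order. Verifying this decrease uniformly across the linear and persistent cuts, and checking that the spent and lowered forms of the context line up correctly with the sequent shapes and turnstiles appearing in the four cut rules, is the part I expect to require the most care.
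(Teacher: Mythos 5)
Your proposal is correct and takes essentially the same route as the paper: reduce the general cuts to cuts over spent contexts via the Multifocused Decomposition lemma (whose height-preservation clause is what keeps the lexicographic induction well-founded), further lower the spent foci via the Lowering lemma so that the phase-boundary rules \rn{\nfy} and \rn{\pfy} can fire in the delicate commutative cases, and rebuild the full context afterwards with the open derivation $\Ee'_\Delta$. The only differences are presentational: your explicit three-component measure, with a flag ranking persistent cuts above linear ones, spells out the exponential cases that the paper's \emph{``representative selection of cases''} leaves implicit.
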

\begin{proof}
  By lexicographic induction on the cut formula and the derivations. We show
  here a representative selections of cases, in particular some of the principal
  and boundary cases of the linear cuts.  
  Because we have the decomposition lemma, we can use it on the first
  subderivation of each rule. It is therefore sufficient to prove the
  admissibility of the following {\it ``spent cut''} rules:
  \[
  \iruule{scut}{\fpf\Per:\Sigma,\Gamma}{\fpf\Per:\Sigma,\foc{P}}{\pf\Per:\Gamma,P^\bot}
  \qquad\qquad
  \iruule{fscut}{\fpf\Per:\Sigma,\Psi_2}{\fpf\Per:\Sigma,\foc{P}}{\fpf\Per:\Psi_2,P^\bot}
  \]
  By the following transformation:
  \vspace{-1.5em}
      \[\ider{\idiin{cut}{\fpf\Per:\Psi,\Gamma}
                        {\vlde{\Ee'_{\foc{P}}}{}{\fpf\Per:\Psi,\foc{P}}{\idopen{\Dd'}{\fpf\Per:\Sigma,\foc{P}}}}
                        {\idopen{\Ee}{\pf\Per:\Gamma,P^\bot}}}
      \qquad\leadsto\qquad
      \ider{\vlde{\Ee'_{\Gamma}}{}{\fpf\Per:\Psi,\Gamma}{\idiin{scut}{\fpf\Per:\Sigma,\Gamma}{
            \idopen{\Dd'}{\fpf\Per:\Sigma,\foc{P}}}
           {\idopen{\Ee}{\pf\Per:\Gamma,P^\bot}}}}
      \]
      
      \newpage

  \noindent
  In fact, we can always expand the \rn{scut} rule as follows:
  \vspace{-2.5em}
  \[\ider{\idiin{scut}{\fpf\Per:\Sigma,\Gamma}
  {\idopen{\Dd}{\fpf\Per:\Sigma,\foc{P}}}
  {\idopen{\Ee}{\pf\Per:\Gamma,P^\bot}}
    }
  \qquad\leadsto\qquad
  \ider{
    \idin{\nshift}{\fpf\Per:\Sigma,\Gamma}{
      \idiin{\lowered{scut}}{\pf\Per:\lowered{\Sigma},\Gamma}
        {\idin{\nshift^{-1}}{\fpf\Per:\lowered{\Sigma},\foc{P}}{\idopen{\Dd}{\fpf\Per:\Sigma,\foc{P}}}}
        {\idopen{\Ee}{\pf\Per:\Gamma,P^\bot}}
      }
    }\]
  This is justified because the \rn{\nshift^{-1}} rule is strongly
  admissible. It is thus sufficient to show admissibility of the
  $\rnm{\lowered{scut}}$ rule.
  For the principal cuts we reason as follows:
  \vspace{-1.5em}
  \[
  \ider{
    \idiin{\lowered{scut}}{\pf\Per:\lowered{\Sigma_1},\lowered{\Sigma_2},\Gamma}
    {\idiin{\tensor}{\fpf\Per:\lowered{\Sigma_1},\lowered{\Sigma_2},\foc{P\tensor
          Q}}
      {
        \idopen{\Dd_1}{\fpf\Per:\lowered{\Sigma_1},\foc{P}}
      }
      {
        \idopen{\Dd_2}{\fpf\Per:\lowered{\Sigma_2},\foc{Q}}
      }}
    {\idin{\parr}{\pf\Per:\Gamma,P^\bot\parr Q^\bot}{
      \idopen{\Ee}{\pf\Per:\Gamma,P^\bot,Q^\bot}}}
  }\]
\begin{lineproof}
  \dstep \Ee' :: \pf\Per:\lowered{\Sigma_1},\Gamma,Q^\bot & by
  \rn{\lowered{scut}} on $P, \Dd_1,\Ee$.\\
  \dgoal \pf\Per:\lowered{\Sigma_1},\lowered{\Sigma_2},\Gamma & by
  \rn{\lowered{scut}} on $Q, \Dd_2,\Ee'$.\\
\end{lineproof}
\vspace{-2em}
\[
\ider{
  \idiin{\lowered{scut}}{\pf\Per:\lowered{\Sigma},\Gamma,\pshift\Theta}{
    \idin{\nshift}{\fpf\Per:\lowered{\Sigma},\foc{\nshift P^\bot}}{
      \idopen{\Dd}{\pf\Per:\lowered{\Sigma},P^\bot}
    }
  }{
    \idin{\pshift}{\pf\Per:\Gamma,\pshift\Theta,\pshift P}{
      \idopen{\Ee}{\fpf\Per:\foc{\Per^{\vec n}},\Gamma,\foc{\Theta},\foc{P}}
    }
  }
}
\]
\begin{lineproof}
  \dstepn \fpf\Per:\lowered{\Sigma},\foc{\Per^{\vec n}},\Gamma,\foc{\Theta} & by \rn{cut} on $P,\Ee,\Dd$.\\
  \dgoal \pf\Per:\lowered{\Sigma},\Gamma,\pshift\Theta & by \rn{\pshift}.\\
\end{lineproof}
Note that we here appeal to the fully general \rn{cut} rule rather
than a more specific rule. In doing so, we implicitly apply the
decomposition highlighted at the beginning of this proof. 

The commutative cuts are dispensed with in a similar manner. 
We show here a few instances:
  \[
  \ider{
    \idiin{\lowered{scut}}{\pf\Per:\lowered{\Sigma},\Gamma,N\parr M}
    {\idopen{\Dd}{\fpf\Per:\lowered{\Sigma},\foc P}}
    {
      \idin{\parr}{\pf\Per:\Gamma,P^\bot,N\parr M}{
        \idopen{\Ee}{
          \pf\Per:\Gamma,P^\bot,N, M
        }
      }}
  }\]
\begin{lineproof}
  \dstep \Ee' :: \pf\Per:\lowered{\Sigma},\Gamma,N,M & by
  \rn{\lowered{scut}} on $P, \Dd,\Ee$.\\
  \dgoal \pf\Per:\lowered{\Sigma},\Gamma,N\parr M & by
  \rn{\parr}.\\
\end{lineproof}

  \[
  \ider{
    \idiin{fcut}{\fpf\Per:\Psi,\Phi}
    {\idopen{\Dd}{\fpf\Per:\Psi,\foc P}}
    {
      \idin{\oplus_L}{\fpf\Per:\Phi,\foc{Q\oplus R},P^\bot}{
        \idopen{\Ee}{
          \pf\Per:\Phi,\foc{Q},P^\bot,
        }
      }}
  }\]
\begin{lineproof}
  \dstep \Ee' :: \fpf\Per:\Psi,\Phi,\foc{Q} & by
  \rn{fcut} on $P, \Dd,\Ee$.\\
  \dgoal \fpf\Per:\Psi,\Phi,\foc{Q\oplus R} & by
  \rn{\oplus_L}.\\
\end{lineproof}

When the second derivation is at the
boundary of a focusing phase, we have restrictions on the shape of the
context. In this case, the fact that we can restrict the shape of the
context of the first premise of the cut rule becomes crucial. 
\vspace{-1.5em}
\[
\ider{
  \idiin{fscut}{\fpf\Per:\lowered{\Sigma},\Gamma,\foc{\nshift\Gamma'}}{
    \idopen{\Dd}{\fpf\Per:\lowered{\Sigma},\foc{P}}
  }{
    \idin{\nshift}{\fpf\Per:\Gamma,\foc{\nshift\Gamma'},P^\bot}{
      \idopen{\Ee}{\pf\Per:\Gamma,\Gamma',P^\bot}
    }
  }
}
\]
\begin{lineproof}
  \dstepn \pf\Per:\lowered{\Sigma},\Gamma,\Gamma' & by \rn{\lowered{scut}} on $P,\Dd,\Ee$.\\
  \dgoal \fpf\Per:\lowered{\Sigma},\Gamma,\foc{\nshift\Gamma'} & by \rn{\nshift}.\\
\end{lineproof}
\vspace{-1.5em}
\[
\ider{
  \idiin{\lowered{scut}}{\pf\Per:\lowered{\Sigma},\Gamma,\pshift\Theta}{
    \idopen{\Dd}{\fpf\Per:\lowered{\Sigma},\foc{P}}
  }{
    \idin{\pshift}{\pf\Per:\Gamma,\pshift\Theta,P^\bot}{
      \idopen{\Ee}{\fpf\Per:\foc{\Per^{\vec n}},\Gamma,\foc{\Theta},P^\bot}
    }
  }
}
\]
\begin{lineproof}
  \dstepn \fpf\Per:\lowered{\Sigma},\foc{\Per^{\vec n}},\Gamma,\foc{\Theta} & by \rn{fscut} on $P,\Dd,\Ee$.\\
  \dgoal \pf\Per:\lowered{\Sigma},\Gamma,\pshift\Theta & by \rn{\pshift}.\\
\end{lineproof}

\end{proof}

\section{Further Restrictions on Multifocused Systems}

As we have seen, it is possible to prove cut elimination in a sensible,
internal way in the multifocused sequent calculus \LLM. However, further
restrictions on proofs, designed to obtain stronger normal forms, could
yield problems. The obvious restriction to consider here is the one that enforces
the \emph{maximality} of multifocusing \cite{chaudhuri:miller:saurin:08:mfoc}
--- note that such a notion of maximality can be observed on proofs but it is
difficult to provide a syntax enforcing maximality. But in this setting, we
encounter a problem with the following configuration, for example:
\vspace*{-1.5em}
\begin{equation} \label{eq:badcut}
\ider{
  \idiin{cut}{\fseq{\cdot:\nfy (a \otimes \pfy \an{b}),[\pfy \bot],
    \an{a},\an{c},\nfy (c \otimes \pfy \an{d}),\nfy (b \otimes d)}}{
  \idtrees{\Dd}{\fseq{\cdot:\nfy (a \otimes \pfy \an{b}),[\pfy \bot],
    [\pfy (\an{a} \parr \nfy b)]}}{2em}}{
  \idtrees{\Ee}{\qquad\rseq{\cdot:\nfy (a \otimes \pfy \an{b}),\an{a},\an{c},
    \nfy (c \otimes \pfy \an{d}),\nfy (b \otimes d)}}{2em}}}
\end{equation}
where the proof $\Dd$ is a simple variation of \emph{identity expansion},
and $\Ee$ is the proof given in Example \ref{ex:prf}, both of them being
maximally multifocused proofs. Now if we want to perform cut elimination
in this restricted system, we expect the resulting proof to be maximally
multifocused as well. However, this will separate the phase treating
$\pfy \bot$ from the lower focusing phase in $\Ee$, although these could
be merged into a single phase. Therefore, if this situation arises during
cut elimination, with some derivation below that concludes in an unfocused
sequent, then the result will not be a maximally multifocused proof. In
the presentation of \cite{chaudhuri:08:synth}, there is a proof
that certain cut elimination strategies preserve maximality. However, this
is done in the higher-order focusing style due to Zeilberger
\cite{zeilberger:08:unidual}, and this only applies to the MALL fragment of
linear logic.

At this point, a better understanding of the dynamics of cut elimination
in this setting is required, and in particular we need to control the
\emph{merging} of focusing phases, so that no pair of phases is left that
could be turned into a single multifocusing phase. A change in the statement
of cut admissibility might be required, or possibly a more elaborate proof
technique, potentially involving rewritings even less {\it ``local''}
than the ones used in the previous section. In particular, one can consider
the following (non-deterministic) \emph{activating} operation dual to the
neutralising one:
$$\begin{array}{c@{\qqquad}c@{\qqquad}c@{\qqquad}c}
    \iaxrule{}{\cdot \activ \cdot} &
    \irule{}{\Psi,N \activ \Xi,N}{\Psi \activ \Xi} &
    \irule{}{\Psi,\nfy P \activ \Xi,[P]}{\Psi \activ \Xi} &
    \irule{}{\Psi,[P] \activ \Xi,[P]}{\Psi \activ \Xi} \\
\end{array}$$
and the following kind of cut, where we have $\Psi,\Gamma \activ \Xi$:
$$\iruule{acut}{\fseq{\Per:\Xi}}{\fseq{\Per:\Psi,[P]}}{\rseq{\Per:\Gamma,P\lln}}$$
so that the conclusion can have \emph{more} foci than either of the
premises. Such a cut would potentially allow the cut shown in
\eqref{eq:badcut} to reduce properly, since it offers the possibility
to have the multifocused sequent $\fseq{\cdot:[a \otimes \pfy \an{b}],
[\pfy \bot],\an{a},\an{c},[c \otimes \pfy \an{d}],\nfy (b \otimes d)}$
as a conclusion, for which a maximally multifocused proof exists ---
yielding a maximally multifocused proof of the corresponding unfocused
sequent. It suggests that internal cut elimination might be possible in
this setting, for an adequate cut.

Alternatively, one could allow foci to be present during the inversion phase,
by replacing the $\nfy$ rule with the following three rules:
\[
  \irule{\mathrm{focus}}{\rseq{\Per:\Psi}}
      {\fseq{\Per:\Psi}}\qquad
  \irule{\mathrm{copy}}{\rseq{\Per,P:\Psi}}
      {\rseq{\Per,P:\Psi,\foc{P}}}\qquad
  \irule{\nfy}{\rseq{\Per:\Psi,\nfy P}}
      {\rseq{\Per:\Psi,\foc P}}
\]
With these rules, it becomes possible to select formulas for focusing one at
a time during the inversion phase. This is a somewhat radical step away from
the usual presentations of focusing, and so care must be taken to ensure that
the resulting system is still well-behaved. With these new rules, the cut rule
can now be written as follows:
\[
\iruule{cut}{\pf\Per:\Psi,\Phi}{\fpf\Per:\Psi,\foc{P}}{\pf\Per:\Phi,P^\bot}
\]
Note the two differences in comparison to the rules presented previously.
First of all, the context in the second premise can now contain additional
foci, and is thus of the form $\Phi$ rather than $\Gamma$. Secondly, the
conclusion is in the judgment corresponding to the inversion phase. With
these changes, it is no longer necessary to have a concept of spent cuts,
and the proof is greatly simplified.

It would be tempting to adopt the following symmetric rules instead of
the $\pfy$ rule:
\[
  \irule{\mathrm{blur}}{\fseq{\Per:\Gamma}}
      {\rseq{\Per:\Gamma}}\qquad
  \irule{\pfy}{\fseq{\Per:\Psi,\foc{\pfy N}}}
      {\fseq{\Per:\Psi,N}}
\]
Observe that in the context in the blur rule, no foci can be present. 
This is to prevent foci from ``bleeding through'' from one phase to the
next. Here, however, caution must be exercised, as exchanging the above
rules yields a system that fails to enjoy the cut elimination property.

Finally, another interesting restriction of the multifocused system could be defined
by controlling inversion phases on negatives in a finer way: instead of
maximally decomposing negatives, deal only with the ones that will be
{\it ``needed''} within the next multifocusing phase. Fortunately, the
inversion of negatives is not critical in the cut elimination proof, so
that such a system would have a strong notion of \emph{bipole} and yet
use the techniques used in the previous section for cut elimination.

\section{Conclusion and Future Work}

We presented here a proof of cut elimination for a multifocused sequent
calculus for linear logic, crucially relying on a decomposition lemma
that expresses the fact that parallel phases can be permuted with one
another inside a multifocused phase. This kind of order irrelevance is
the essence of focusing and, at the {\it ``higher''} level of focusing
phases, this is also the essence of multifocusing: we believe such a lemma
can be a useful tool when proving cut elimination for even stronger
restrictions of the sequent calculus. Also, operations such as neutralising
or activating might prove useful if the precise statement of cut needs
to be changed to fit the restrictions of a system.

However, many questions remain open, such as the precise interaction
between cut elimination and the organisation of multifocusing phases in
a proof, or the definition of a maximally multifocused system equipped
with an adequate cut --- which could be eliminated entirely inside the
system. From a broader perspective, the question of strong normal forms
in the sequent calculus or other related systems allowing for permutations is tied
to our ability to perform cut elimination in a sensible way under the
constraints imposed by multifocusing or even stronger restrictions. For this
reason it would be interesting to consider multifocusing in settings
such as linear natural deduction \cite{brocknannestad:schuermann:10:focnd}
or the calculus of structures \cite{chaudhuri:guenot:strassburger:11:foccos}.
From the viewpoint of computation, this means investigating cut elimination
as a form of computation in a system where many interleavings of independent
steps are abstracted away, as done with proof-nets. It connects the structural
approach of standard proof theory to the study of graph-based computational
models \cite{guerrini:martini:masini:01:pngc}, but this still requires to
improve the understanding of normal forms for larger fragments of linear
logic.

Finally, extending the notion of focusing raises the question of the
elegance of completeness proofs for multifocused systems. Indeed, it seems
difficult to prove the focusing result through cut elimination in a simple,
natural way \cite{simmons:14:strfoc} if the cut elimination proof itself is
complex.

\vspace*{1em}
{\bf Acknowledgements}. This work was partially funded by the \emph{Demtech}
grant number 10-092309 from the Danish Council for Strategic Research.

\begin{raggedright}
\bibliographystyle{eptcs/eptcs.bst}
\bibliography{cite}
\end{raggedright}

\end{document}